\pgfplotsset{every tick label/.append style={font=\footnotesize}}
\pgfplotsset{compat=1.18}
\newcolumntype{K}[1]{>{\centering\arraybackslash$}p{#1}<{$}}
\newcolumntype{R}{>{\raggedleft\arraybackslash}X}
\newcolumntype{L}{>{\raggedright\arraybackslash}X}
\newcolumntype{C}{>{\centering\arraybackslash}X}
\newcolumntype{A}{>{\columncolor{gray!25}}C}
\newcolumntype{a}{>{\columncolor{gray!25}}c}
\newlength{\tablen}
\newcolumntype{.}{D{.}{.}{-1}}
\renewcommand\p@subfigure{\arabic{figure}.}
\renewcommand\p@subtable{\arabic{table}.}
\setlist[itemize]{leftmargin=2.5\parindent}
\setlist[enumerate]{leftmargin=2.5\parindent}
\def\addlegendimage{\csname pgfplots@addlegendimage\endcsname}
\theoremstyle{plain}
\newtheorem{proposition}{Proposition}
\theoremstyle{definition}
\newtheorem{example}{Example}
\theoremstyle{remark}
\let\@fnsymbol\@alph
\def\keywords{\vspace{.5em} 
{\noindent \textit{Keywords}: }}
\def\AMS{\vspace{.5em} 
{\noindent \textbf{\emph{MSC} class}: }}
\def\JEL{\vspace{.5em} 
{\noindent \textbf{\emph{JEL} classification number}: }}
\title{The allocation of FIFA World Cup slots \\ based on the ranking of confederations}
\author{\href{https://sites.google.com/view/laszlocsato}{L\'aszl\'o Csat\'o}\thanks{~Corresponding author. Email: \emph{laszlo.csato@sztaki.hun-ren.hu} \newline
Institute for Computer Science and Control (SZTAKI), Hungarian Research Network (HUN-REN), Laboratory on Engineering and Management Intelligence, Research Group of Operations Research and Decision Systems, Budapest, Hungary \newline
Corvinus University of Budapest (BCE), Institute of Operations and Decision Sciences, Department of Operations Research and Actuarial Sciences, Budapest, Hungary}
$\qquad \qquad$
L\'aszl\'o Marcell Kiss\thanks{~Email: \emph{laszlo.kiss2@stud.uni-corvinus.hu} \newline
Corvinus University of Budapest (BCE), Budapest, Hungary}
$\qquad \qquad$
\href{https://sites.google.com/view/zsomborszadoczki/}{Zsombor Sz\'adoczki}\thanks{~Email: \emph{zsombor.szadoczki@sztaki.hun-ren.hu} \newline
Institute for Computer Science and Control (SZTAKI), Hungarian Research Network (HUN-REN), Laboratory on Engineering and Management Intelligence, Budapest, Hungary \newline
Corvinus University of Budapest (BCE), Institute of Operations and Decision Sciences, Department of Operations Research and Actuarial Sciences, Budapest, Hungary}
}
\date{\today}
\def\Dedication{
{\noindent
``\emph{Render unto Caesar the things that are Caesar's, and unto God the things that are God's}''
}
\flushright
\noindent (Matthew 22:21)
\vspace{0.5cm} 
\justify }
\begin{document}

\maketitle
\thispagestyle{empty}
\Dedication

\begin{abstract}
\noindent
Qualifications for several world championships in sports are organised such that distinct sets of teams play in their own tournament for a predetermined number of slots. Inspired by a recent work studying the problem with the tools from the literature on fair allocation, this paper provides an alternative approach based on historical matches between these sets of teams. We focus on the FIFA World Cup due to the existence of an official rating system and its recent expansion to 48 teams, as well as to allow for a comparison with the already suggested allocations. Our proposal extends the methodology of the FIFA World Ranking to compare the strengths of five confederations. Various allocations are presented depending on the length of the sample, the set of teams considered, as well as the frequency of rating updates. The results show that more European and South American teams should play in the FIFA World Cup. The ranking of continents by the number of deserved slots is different from the ranking implied by FIFA policy. We recommend allocating at least some slots transparently, based on historical performances, similar to the access list of the UEFA Champions League.

\keywords{Elo method; fair allocation; FIFA World Cup; OR in sports; tournament design}

\AMS{90-10, 90B90, 91B14}

\JEL{C44, D71, Z20}
\end{abstract}

\clearpage

\section{Introduction} \label{Sec1}

In most sports tournaments, the number of competitors is limited. Thus, if there is no objective measure of performance such as finishing time, the allocation of these qualifying slots is far from trivial. The current paper aims to suggest an approach to that end on the basis of pairwise comparisons between different sets of teams.

In particular, we will pick up the FIFA World Cup as a case study.
Every four years, FIFA (F\'ed\'eration internationale de football association), the governing body of (association) football---the most played sport around the world---organises the FIFA World Cup. This is one of the most popular sporting events in the world; around 5 billion people have engaged in the 2022 FIFA World Cup, which has generated 93.6 million posts across all social media platforms with 262 billion cumulative reach \citep{FIFA2023a}. The FIFA World Cup also attracts many visitors from around the world who are keen to visit the host nation and the neighboring states, which generates a powerful demand for providers of tourist services \citep{ManciniTrikiPiya2022}.

The FIFA World Cup qualification is mainly played within the six FIFA continental zones: AFC (Asia), CAF (Africa), CONCACAF (North and Central America and the Caribbean), CONMEBOL (South America), OFC (Oceania), and UEFA (Europe). For each World Cup, FIFA decides the number of places given to the six confederations. However, the allocation rule is fully obscure \citep{StoneRod2016} and unfair \citep{Csato2023c}, even though the issue has recently received much attention due to the expansion to 48 teams from 2026 \citep{KrumerMoreno-Ternero2023}.

The suggested allocation rule essentially relies on a ranking of these six continents. In particular, we modify the calculation formula of the official FIFA World Ranking, used to measure the strengths of national teams, to quantify the performance of each confederation in the previous FIFA World Cups and their qualifications. Although one might think such an extension is straightforward, it requires considering a lot of factors as detailed in Section~\ref{Sec4}. Finally, Proposition~\ref{Prop1} uncovers that the ratios of the win expectancies derived from Elo ratings are transitive, thus, they can be directly used to proportionally allocate the number of slots available.

It is found that more European and South American teams should play in the 2026 FIFA World Cup than allowed by FIFA. In addition, the CONCACAF quota needs to be higher than the AFC quota, which is in stark contrast with the policy of the previous decades. The records of African teams turn out to be the most volatile and unpredictable.

The methodology proposed here can be used to allocate the qualifying slots in a transparent manner. Compared to \citet{KrumerMoreno-Ternero2023}, an important novelty of our study is the ``look into the past'': historical slot allocations are also presented assuming that the expansion to 48 teams would have been implemented before. Furthermore, since the suggestion is based on a reasonable extension of the official FIFA World Ranking, it might be more easily accepted by the stakeholders (officials, players, coaches, TV broadcasters, and fans) than the complex methods of fair allocation proposed by \citet{KrumerMoreno-Ternero2023}, which depend to a great extent on how proportionality is defined.

It must be noted that FIFA wants to achieve several aims with the World Cup other than guaranteeing the fairness of slot allocation, for example, opening new markets and maximising ticket sales. Shortly, the FIFA World Cup can serve at least four different economic goals \citep{Kurscheidt2006}:
(1) it is the financing basis for the global football business;
(2) it is used to attract corporate sponsors, in particular, companies from the industry of sports goods;
(3) it offers a unique opportunity for professionals, especially those who are not playing in major leagues, to accelerate their careers; and
(4) it is promoted as contributing to the development of the host nations and cities.

Furthermore, since broadcasting revenue depends on the participating teams, increasing the chances of the countries with the largest markets (e.g.\ China) may substantially contribute to the value of broadcasting rights. The allocation of higher revenue---an issue that has recently received serious attention in the literature \citep{BergantinosMoreno-Ternero2020a, BergantinosMoreno-Ternero2023d, BergantinosMoreno-Ternero2023c}---could eventually help achieve other crucial targets such as promoting football in less developed nations, and improving fairness even if the slot allocation is not exclusively determined by historical performances.

Therefore, some other criteria of justice could explain the allocation policy of FIFA, and we do not intend to state that the proposed meritocratic allocation rule should be chosen. However, the number of slots ``deserved'' by each continental zone is an important element of fairness as the beginning quote shows. Our results certainly provide useful information for the decision-makers when debating about the allocation of FIFA World Cup slots.

The paper is structured as follows. Section~\ref{Sec2} provides a concise literature review. The underlying data are described in Section~\ref{Sec3}, while the methodology is detailed in Section~\ref{Sec4}. Section~\ref{Sec5} presents the results, and Section~\ref{Sec6} offers some concluding remarks.

\section{Related literature} \label{Sec2}

Although the allocation of scarce resources is a fundamental problem of economics \citep{Thomson2019}, there are remarkably few papers on allocating the FIFA World Cup berths.
According to \citet{StoneRod2016}, the current system of qualification is based neither on ensuring the participation of the best 32 teams in the world, nor does it fairly allocate slots by the number of teams per confederation or any other reasonable metric. Indeed, \citet{Csato2023c} finds serious differences between the continents: for instance, a South American team could have tripled its chances of qualification by playing in Asia. Analogously, the move of Australia from the Oceanian to the Asian zone has increased its probability of participating in the 2018 FIFA World Cup by about 65\%.

Inspired by the recent expansion to 48 teams, \citet{KrumerMoreno-Ternero2023} explore the allocation of additional slots among continental confederations by using the standard tools of the fair allocation literature. The ``claims'' of the continents are based on the FIFA World Ranking and the World Football Elo Ratings (\url{http://eloratings.net/}) that are summed for all member countries or just for countries being in the top 48. They also consider the average annual number of teams in the top 31 and the average annual number of teams ranked 32--48. In contrast, our approach exclusively depends on the results of the national teams that have played in the FIFA World Cup and its inter-continental play-offs. Therefore, neither friendlies, nor matches played in continental championships and qualifications affect the allocation of FIFA World Cup slots proposed here since these games provide either unreliable (friendlies) or no information on the relative strengths of the regions.

On the other hand, the tournament format of the 2026 FIFA World Cup has been extensively investigated.
Both \citet{Guyon2020a} and \citet{ChaterArrondelGayantLaslier2021} have strongly criticised the original choice of FIFA in 2017 (16 groups of three teams each), although \citet{Truta2018} argues that this format would not have increased the number of non-competitive matches. FIFA has finally decided for 12 groups of four teams each, an option recommended by \citet{Guyon2020a} and studied by \citet{ChaterArrondelGayantLaslier2021}. There are further proposals: \citet{Renno-Costa2023} presents a double-elimination structure instead of the group stage in order to produce more competitive and exciting matches, while \citet{GuajardoKrumer2023} develop three alternative formats and schedules for a FIFA World Cup with 12 groups of four teams to meet several important criteria of fairness.

Our allocation rule follows the methodology of the official FIFA World Ranking, which uses an Elo-based approach since 2018 \citep{FIFA2018c}. \citet{GomesdePinhoZancoSzczecinskiKuhnSeara2024} present a comprehensive stochastic analysis of the Elo algorithm.
The 2018 reform has corrected several weaknesses of the previous FIFA World Ranking that are summarised by \citet{CeaDuranGuajardoSureSiebertZamorano2020, Csato2021a, Kaminski2022, LasekSzlavikGagolewskiBhulai2016}. However, \citet{SzczecinskiRoatis2022} still identify some possible improvements with respect to predictive capacity:
(a) the importance of games defined by FIFA is counterproductive;
(b) home field advantage should be included;
(c) the results can be weighted with the goal differential;
(d) the shootout and knockout rules need to be removed as they are not rooted in any solid statistical principle.
Nonetheless, even though the World Football Elo Ratings take home advantage and goal difference into account and are a good indicator of football success at the international level \citep{LasekSzlavikBhulai2013, GasquezRoyuela2016}, we will retain the current formula of the FIFA World Ranking since this might be more acceptable for the decision-makers.




\section{Data} \label{Sec3}

We have collected all matches played in the FIFA World Cups and their inter-continental play-offs since the 1954 edition.
Two inter-continental play-offs are disregarded:
\begin{itemize}
\item
In the \href{https://en.wikipedia.org/wiki/1958_FIFA_World_Cup_qualification_(AFC/CAF\%E2\%80\%93UEFA_play-off)}{1958 FIFA World Cup qualification (AFC/CAF--UEFA play-off)}, played by Israel and Wales, Israel had advanced from the AFC/CAF qualification zone without playing any match due to the withdrawal of several teams for political reasons.
\item
In the \href{https://en.wikipedia.org/wiki/1974_FIFA_World_Cup_qualification_(UEFA\%E2\%80\%93CONMEBOL_play-off)}{1974 FIFA World Cup qualification (UEFA--CONMEBOL play-off)}, played by the Soviet Union and Chile, the Soviet Union refused to play in the second leg that was held in a stadium where some people were tortured and killed just two weeks before the match.
\end{itemize}
1954 was chosen as the starting date because some teams withdrew from the 1950 FIFA World Cup, and the previous two competitions (1942, 1946) were cancelled due to the second world war. A recent paper has also analysed competitive imbalance between FIFA World Cup groups from 1954 through 2022 \citep{LaprePalazzolo2023}.
Each country has been classified to the confederation where it had played at the time of the match. This is especially important for Australia, which played in the OFC zone until the 2006 FIFA World Cup but in the AFC zone since then. Another interesting fact is that Israel won the 1990 FIFA World Cup qualification for the Oceanian zone and, thus, qualified for the inter-continental play-off against Colombia.

\subsection{Ignoring the Oceanian (OFC) zone}

National teams from Oceania played in the Asian (AFC) zone until the 1982 FIFA World Cup. For the FIFA World Cups from 1986 to 2022, the OFC had a ``half'' slot, that is, the winner of its qualification tournament participated in the inter-continental play-offs. This country was Australia between 1986 and 2006, except for 1990 (when it was Israel), and New Zealand since 2010. Australia has moved to the AFC zone from the 2010 FIFA World Cup. Consequently, at most one Oceanian team played in any FIFA World Cup, and it was usually a country that is currently outside the OFC. Therefore, in contrast to \citet{KrumerMoreno-Ternero2023}, we have decided to fix the OFC quota at $4/3$ as determined by FIFA for the 2026 World Cup. However, most rules suggested by \citet{KrumerMoreno-Ternero2023} imply a lower number of slots deserved.

\subsection{Consideration of inter-continental play-offs}

Inter-continental play-offs are usually organised as two-legged home-and-away ties. Since the qualification is determined by the aggregated result over the two legs, a team may be satisfied with a draw in the second leg. For example, a 1-1 draw was sufficient for Uruguay against Costa Rica to play in the 2010 FIFA World Cup. This potential problem of incentives can be treated by considering the play-off as one match, possibly with a higher weight because playing more matches favours the stronger team in general \citep{LasekGagolewski2018}. This is done by the Football Club Elo Ratings (\url{http://clubelo.com/}), a website providing Elo ratings for European club football, where the weight of these two-legged clashes is $\sqrt{2}$ \citep{Csato2022b}.

The FIFA World Ranking treats two-legged play-offs as two separate matches.

\subsection{Disregarding the last round of group matches}

In the last round of the group stage, some already qualified teams may play with little enthusiasm and take into account other factors such as resting their best players \citep{ChaterArrondelGayantLaslier2021}. These matches can offer opportunities for collusion \citep{KendallLenten2017, Guyon2020a}, or even for tanking (deliberately losing to face a preferred opponent in the next stage).
As an illustration, take Group D in the 2022 FIFA World Cup. Here, France beat Australia and Denmark in the first two rounds, which ensured its qualification for the knockout stage. France won the group despite losing against Tunisia in the third round.

The second group stage, organised in the 1974, 1978, and 1982 FIFA World Cups, suffers from this problem of incentives to a lesser extent because (1) the two group winners advanced to the final and the two runners-up advanced to the third-place game in 1974 and 1978; and (2) the four group winners advanced to the semifinals in 1982.

Therefore, we have decided to disregard the last round of group matches---except for the second group stage---in the baseline in order to ignore the potential problem of incentives. Even though that solution is not perfect (for instance, a draw was equivalent to a win for the Netherlands against Brazil in the 1974 FIFA World Cup Group A), identifying all these matches would be a cumbersome procedure without a substantial benefit.

\subsection{Descriptive statistics}

The number of national teams playing in the FIFA World Cup was 16 between 1954 and 1978 (7 editions), 24 from 1982 to 1994 (4 editions), and 32 since 1998 (7 editions). The tournament always started with a group stage played in (4/6/8) groups of four teams each. In 1974, 1978, and 1982, there was a second group stage with two groups of four (1974, 1978), or four groups of three (1982). Usually, 8 (until 1970) or 16 (since 1986) teams qualified for the knockout phase, except for the three events with a second group stage, where the knockout phase consisted of only the final and the third-place game (1974, 1978) or two semifinals, the final, and the third-place game (1982).
\citet[Table~1]{LaprePalazzolo2023} overviews the formats of all FIFA World Cups.

\begin{table}[t!]
\centering
  \caption{The set of matches by pairs of confederations}
  \label{Table1}
\centerline{
\begin{threeparttable}
    \rowcolors{1}{}{gray!20}
\begin{small}
\begin{tabularx}{1.15\textwidth}{lCCCCCCCCCCCCCCCCCCc} \toprule
&	54&	58&	62&	66&	70&	74&	78&	82&	86&	90&	94&	98&	02&	06&	10&	14&	18&	22 & $\sum$  \\ \bottomrule
    AFC--CAF & 0     & 0     & 0     & 0     & 0     & 0     & 0     & 0     & 0     & 0     & 1     & 0     & 1     & 2     & 2     & 3     & 2     & 3     & 14 \\
    AFC--CONC & 0     & 0     & 0     & 0     & 0     & 0     & 0     & 0     & 0     & 0     & 0     & 2     & 2     & 1     & 0     & 0     & 1     & 1     & 7 \\
    AFC--CONM & 0     & 0     & 0     & 1     & 1     & 0     & 0     & 0     & 2     & 1     & 1     & 1     & 1     & 0     & 4     & 2     & 2     & 5     & 21 \\
    AFC--UEFA & 2     & 0     & 0     & 2     & 1     & 0     & 2     & 2     & 2     & 3     & 3     & 5     & 9     & 4     & 4     & 3     & 6     & 6     & 54 \\
    CAF--CONC & 0     & 0     & 0     & 0     & 0     & 0     & 1     & 0     & 0     & 0     & 0     & 0     & 0     & 1     & 2     & 2     & 0     & 0     & 6 \\
    CAF--CONM & 0     & 0     & 0     & 0     & 1     & 0     & 0     & 1     & 1     & 2     & 2     & 2     & 2     & 2     & 4     & 1     & 1     & 0     & 19 \\
    CAF--UEFA & 0     & 0     & 0     & 0     & 1     & 2     & 1     & 3     & 4     & 4     & 4     & 9     & 9     & 6     & 6     & 6     & 7     & 12    & 74 \\
    CONC--CONM & 1     & 0     & 1     & 0     & 0     & 0     & 0     & 0     & 1     & 1     & 2     & 1     & 1     & 2     & 2     & 3     & 2     & 1     & 18 \\
    CONC--UEFA & 1     & 2     & 1     & 2     & 3     & 2     & 1     & 4     & 5     & 4     & 4     & 4     & 4     & 5     & 4     & 7     & 4     & 7     & 64 \\
    CONM--UEFA & 7     & 9     & 11    & 9     & 7     & 13    & 11    & 9     & 10    & 9     & 8     & 12    & 11    & 8     & 9     & 12    & 11    & 8     & 174 \\ \hline
    Playoffs (1 leg) & 0     & 0     & 0     & 0     & 0     & 0     & 0     & 0     & 0     & 0     & 0     & 0     & 0     & 0     & 0     & 0     & 0     & 1     & 1 \\
    Playoffs (2 legs) & 0     & 0     & 6     & 0     & 0     & 0     & 1     & 0     & 0     & 0     & 0     & 0     & 1     & 1     & 1     & 1     & 1     & 0     & 12 \\ \hline
    $\sum$ & 11    & 11    & 19    & 14    & 14    & 17    & 17    & 19    & 25    & 24    & 25    & 36    & 41    & 32    & 38    & 40    & 37    & 44    & \textbf{464} \\ \toprule
\end{tabularx}
\end{small}
\begin{tablenotes} \footnotesize
\item
Years of FIFA World Cups are abbreviated by the last two digits.
\item
CONC stands for CONCACAF; CONM stands for CONMEBOL.
\item
Matches played in the last round of the (first) group stage are not included.
\end{tablenotes}
\end{threeparttable}
}
\end{table}

Table~\ref{Table1} presents the number of matches between different confederations in our database that has been collected from \url{https://www.kaggle.com/datasets/piterfm/fifa-football-world-cup?select=matches_1930_2022.csv}. The inter-continental play-offs have been added manually. Unsurprisingly, most of the games involve a European team: 81\% of the 451 inter-continental matches played in the FIFA World Cups since 1954. More than one-third of the matches have been played by a CONMEBOL team against a UEFA team as these continents are the most successful. On the other hand, there are less than 10 matches between AFC and CONCACAF (two further two-legged clashes can be found between these continents in the play-offs), as well as between CAF and CONCACAF.
 
\subsection{Historical slot allocations}

\begin{table}[t!]
  \centering
  \caption{Distribution of slots in the FIFA World Cups with at least 32 teams}
  \label{Table2}
\begin{threeparttable}
    \rowcolors{1}{}{gray!20}
    \begin{tabularx}{\textwidth}{l CCC CCC Cc} \toprule
    Confederation & 1998  & 2002  & 2006  & 2010  & 2014  & 2018  & 2022  & 2026 \\ \bottomrule
    AFC   & 3.5   & 2.5+2 & 4.5   & 4.5   & 4.5   & 4.5   & 4.5+1   & 8.33 \\
    CAF   & 5     & 5     & 5     & 5+1   & 5     & 5     & 5     & 9.33 \\
    CONCACAF & 3     & 3     & 3.5   & 3.5   & 3.5   & 3.5   & 3.5   & 3.67+3 \\
    CONMEBOL & 5     & 4.5   & 4.5   & 4.5   & 4.5+1   & 4.5   & 4.5   & 6.33 \\
    OFC   & 0.5   & 0.5   & 0.5   & 0.5   & 0.5   & 0.5   & 0.5   & 1.33 \\
    UEFA  & 14+1  & 14.5  & 13+1  & 13    & 13    & 13+1  & 13    & 16 \\ \bottomrule
    Total & 32    & 32    & 32    & 32    & 32    & 32    & 32    & 48 \\ \toprule
    \end{tabularx}
\begin{tablenotes} \footnotesize
\item
Fractions represent slots available through the play-offs.
\item
The number after the $+$ sign indicates host(s).
\end{tablenotes}
\end{threeparttable}
\end{table}

Table~\ref{Table2} shows the places allocated for the six continents from 1998. The distribution was fixed from 2006 until 2022, an additional spot varied due to the region of the host country. The expansion in 2026 has favoured mainly the Asian and African zones in absolute terms, while Europe and South America have benefited the least in relative terms.

\section{Methodology} \label{Sec4}

We aim to follow the calculation formula of the official FIFA World Ranking \citep{FIFA2018c} to the extent possible. This has been constructed to quantify the strength of national teams but can be modified to measure the strengths of sets of teams as well.

\subsection{Preliminaries} \label{Sec41}

Let us overview the FIFA World Ranking \citep{FIFA2018c}, which is essentially an Elo-based method.

The rating $R_i^0$ of team $i$ is updated to $R_i^1$ after it plays a match against team $j$ as follows:
\begin{equation} \label{eq_Elo_update}
R_i^{(1)} = R_i^{(0)} + \Delta R_i = R_i^{(0)} + I \times \left( W - W_{ij}^E \right),  
\end{equation}
where $I$ is the importance, $W$ is the result, and $W_{ij}^E$ is the expected result of the match.

Regarding match importance, there are three types of games in our data \citep{FIFA2018c}:
(a) $I=25$ for FIFA World Cup qualification matches (inter-continental play-offs);
(b) $I=50$ for FIFA World Cup matches until the Round of 16;
(c) $I=60$ for FIFA World Cup matches from the quarterfinals onwards.

There was a second group stage instead of quarterfinals between 1974 and 1982 with eight teams in 1974, 1978, and 12 teams in 1982. The importance of matches played in this second group stage is chosen to be 60 in 1974 and 1978, but 50 in 1982.

The result of any match can be a win ($W=1$), a draw ($W=0.5$), or a loss ($W=0$).
In the knockout stage, each match should have a winner. Matches decided after extra time are treated accordingly. In the case of penalty shootouts, the match becomes a draw for the losing team ($W=0.5$) and ``half a win'' for the winning team ($W=0.75$).

$W_e$ is determined by the following equation:
\begin{equation} \label{eq_win_probability}
W_{ij}^E = \frac{1}{1+10^{- \left( R_i - R_j \right) / 600}}.
\end{equation}
Note that $W_{ji}^E = 1- W_{ij}^E$. $W_{ij}^E > 0.5$ if and only if $R_i > R_j$, hence, a draw is advantageous for the weaker and unfavourable for the stronger team. $600$ is a scaling factor.

There is also a special rule: teams cannot earn negative points in the knockout round of the FIFA World Cup as a result of losing, or winning a penalty shootout against a weaker team.

Formula~\eqref{eq_Elo_update} ensures that the sum of Elo ratings over all teams does not change since $\Delta R_i + \Delta R_j = 0$ for any match played by teams $i$ and $j$.
However, the treatment of matches decided by a penalty shootout and the prohibition of losing points in the knockout stage imply a constant ``inflation'' in the ratings. Theoretically, this might imply a ``vicious'' circle: confederations with many delegates have a higher probability to play in the knockout stage, where their Elo cannot decrease, thus, their dominance becomes guaranteed. Consequently, it would be useful to remove both the knockout and the shootout rules that
are not rooted in any solid statistical principle \citep{SzczecinskiRoatis2022}.

An online calculator for the FIFA rating is available at \url{https://hermann-baum.de/excel/WorldCup/en/FIFA_Ranking.php}, where the corresponding Excel solution can be downloaded, too.

\subsection{The frequency of updating}

The above method of the FIFA World Ranking can be directly used to evaluate the performance of any set of teams, albeit this implies a novel challenge: choosing the frequency of updating. A given team very rarely plays two matches in less than three days, and cannot play two matches simultaneously by definition. On the other hand, if one considers, for example, Argentina and Brazil (two CONMEBOL nations) might play against different opponents on the same day or even at the same time. Section~\ref{Sec43} will present later that the sequence of matches has a non-negligible effect on the Elo ratings---thus, it \emph{does} count in which order the games of Argentina and Brazil are taken into account.

Therefore, three options will be considered:
\begin{enumerate}
\item
\emph{Updating between rounds}: The Elo ratings of the continents are fixed at the beginning of each round of group matches, as well as at the beginning of each knockout round. The inter-continental play-offs are regarded as an extra round. The updates $\Delta R$ are determined for all relevant matches and summed up to get the new Elo ratings. \\
The maximal number of updates in a FIFA World Cup equals the maximal number of matches a team can play, which is seven for the World Cups between 1998 and 2022. \\
In this case, the update frequency will be called \emph{Round}.
\item
\emph{Updating between stages}: The Elo ratings of the continents are fixed at the beginning of each group stage, as well as at the beginning of each knockout round. The inter-continental play-offs count as an extra stage. The updates $\Delta R$ are determined for all relevant matches and summed up to get the new Elo ratings. \\
The maximal number of updates in a FIFA World Cup equals the number of knockout rounds plus the number of group stages, which is five for the World Cups between 1998 and 2022. \\
In this case, the update frequency will be called \emph{Stage}.
\item
\emph{Updating between tournaments}: The Elo ratings of the continents are fixed at the beginning of the inter-continental play-offs. The updates $\Delta R$ are determined for all relevant matches until the end of the World Cup and summed up to get the new Elo ratings. \\
The maximal number of updates in a FIFA World Cup equals one. \\
In this case, the update frequency will be called \emph{4-Year}.
\end{enumerate}

\subsection{The role of match schedule} \label{Sec43}

Take two teams, the underdog $i$ and the favourite $j$ whose initial Elo ratings $R_i^{(0)}$ and $R_j^{(0)}$ differ by $D$: $R_j^{(0)} - R_i^{(0)} = D > 0$. Suppose that they play two matches of equal importance $I$.

If the first game is won by $i$, and the second is won by $j$ such that the ratings are updated between the games, then the change in the Elo ratings after the first game is
\[
\Delta_1^{(ij)} = \Delta R_i = - \Delta R_j = \frac{I}{1+10^{-D/600}}.
\]
Consequently, the difference between the Elo ratings will be $D_1^{(ij)} = D - 2\Delta_1^{(ij)}$, and the change in the Elo ratings after the second game is
\[
\Delta_2^{(ij)} = -\Delta R_i = \Delta R_j = \frac{I}{1+10^{D_1^{(ij)}/600}}.
\]

If the first game is won by $j$, and the second is won by $i$ such that the ratings are updated between the games, then the change in the Elo ratings after the first game is
\[
\Delta_1^{(ji)} = - \Delta R_i = \Delta R_j = \frac{I}{1+10^{D/600}}.
\]
Consequently, the difference between the Elo ratings will be $D_1^{(ji)} = D + 2\Delta_1^{(ji)}$, and the change in the Elo ratings after the second game is
\[
\Delta_2^{(ji)} = \Delta R_i = - \Delta R_j = \frac{I}{1+10^{-D_1^{(ji)}/600}}.
\]

Since $\Delta_1^{(ij)} < \Delta_2^{(ji)}$ due to $D < D_1^{(ji)}$ and $\Delta_2^{(ij)} > \Delta_1^{(ji)}$ due to $D_1^{(ij)} < D$, the implication $\Delta^{(ij)} = \Delta_1^{(ij)} - \Delta_2^{(ij)} < -\Delta_1^{(ji)} + \Delta_2^{(ji)} = \Delta^{(ji)}$ holds, which guarantees that the difference between the Elo ratings is smaller in the second scenario when the second game is won by team $i$ rather than the first.

For instance, with $D=50$ and $I=50$, $\Delta_1^{(ij)} \approx 27.39$, $\Delta_2^{(ij)} \approx 25.23$, $\Delta_1^{(ji)} \approx 22.61$, and $\Delta_2^{(ji)} \approx 29.52$, respectively. Hence, $\Delta^{(ij)} = 2.16$ but $\Delta^{(ji)} = 6.91$, that is, the Elo ratings of teams $i$ and $j$ become closer if team $i$ wins the second match.

This property shows a crucial advantage of our Elo-based approach: it automatically accounts for the order of matches, decreasing the influence of games played in the distant past compared to recent games. Thus, no extra discounting factor is needed.

It is also clear that the frequency of updating is related to the effect of the schedule. In particular, the order of the games does not influence the changes in the Elo ratings if the ratings are not updated between the two games. In the example above, $\Delta = \Delta R_i = - \Delta R_j \approx 4.78$ both in the first (when team $i$ wins the first match) and second (when team $i$ wins the second match) scenarios.

\subsection{Separating a set of top teams} \label{Sec44}

If a continent contains a national team regularly qualifying for FIFA World Cups, it becomes questionable to what extent its performance should be taken into account for the quota deserved by the continent. For instance, Brazil has played in all 18 FIFA World Cups since 1954, hence, Brazil is not a ``marginal'' country whose chance of qualification can be meaningfully improved if the number of CONMEBOL slots is increased. Furthermore, as Brazil has the best records in FIFA World Cups (5 titles and regular participation in at least the quarterfinals), its matches will strongly increase the CONMEBOL share. On the other hand, the Netherlands has played only in nine FIFA World Cups (50\%) since 1954. But it has played three times in the finals and five times in the semifinals, thus, increasing the berths of UEFA can lead to the qualification of a quite competitive national team such as the Netherlands.

Ideally, the allocation rule should be determined by the performance of teams that have an ``average'' probability of qualifying for the FIFA World Cup. The weakest teams do not appear in our database as they are eliminated in the qualification tournaments. The top teams can be separated into a ``seeded'' set, a kind of ``extra'' continent.

Inspired by this idea, two sets of seeded countries have been chosen based on the number of appearances in the FIFA World Cup from 1954 to 2022.
Seeded set S1 contains four countries:
\begin{itemize}
\item 
Argentina (CONMEBOL, 16/18);
\item
Brazil (CONMEBOL, 18/18);
\item
England (UEFA, 15/18);
\item
Germany/West Germany (UEFA, 18/18).
\end{itemize}
Seeded set S2 consists of four additional national teams compared to Seeded set S1:
\begin{itemize}
\item 
France (UEFA, 13/18);
\item
Italy (UEFA, 15/18);
\item
Mexico (CONCACAF, 15/18);
\item
Spain (UEFA, 14/18).
\end{itemize}
No other country has played in more than 12 FIFA World Cups during this period. Although England, Italy, and Mexico have similarly failed to qualify for three tournaments, Italy has missed the last two in 2018 and 2022. Mexico is not considered in the first set because its seeding strongly decreases the number of matches played by CONCACAF teams (as we will see later in Table~\ref{Table3}), and it has never played in the semifinals.
This division may be somewhat arbitrary but it makes sense in our opinion and provides a useful approach for checking the robustness of the allocation.

If some seeded countries are separated, the quota of the confederations is determined without the matches played between their national teams (e.g.\ if eight countries are seeded, the result of Argentina vs.\ Spain does not count at all). However, performances against these top teams are taken into consideration, for instance, an extra win (loss) against Brazil increases (decreases) the number of slots deserved.
Furthermore, the total number of slots to be allocated is reduced by the number of seeds (0/4/8) but these berths are added to the corresponding confederation at the end. For example, if CONMEBOL receives $x$ berths with the Seeded set S1 or S2, then the CONMEBOL quota will be $x+2$ since both sets contain two South American nations.

\subsection{The allocation rule} \label{Sec45}

For any two teams, equation~\eqref{eq_win_probability} immediately gives how many times team $i$ is better than team $j$ by $a_{ij} = W_{ij}^E / W_{ji}^E$.
These values are transitive according to the following result.

\begin{proposition} \label{Prop1}
Let $a_{ij} = W_{ij}^E / W_{ji}^E$ mean how many times team $i$ is more likely to win against team $j$ than vice versa. These paired comparisons are transitive, tha is, $a_{ik} = a_{ij} a_{jk}$ for any $i,j,k$.
\end{proposition}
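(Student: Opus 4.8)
The plan is to reduce $a_{ij}$ to a closed form that depends only on the rating difference, after which transitivity becomes an immediate consequence of the additivity of exponents. The key observation is that the logistic expression in equation~\eqref{eq_win_probability} collapses dramatically when one forms the ratio $W_{ij}^E / W_{ji}^E$.

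First I would introduce the shorthand $x = 10^{-(R_i - R_j)/600}$, so that equation~\eqref{eq_win_probability} reads $W_{ij}^E = 1/(1+x)$. Using the identity $W_{ji}^E = 1 - W_{ij}^E$ noted just after that equation, I would compute $W_{ji}^E = 1 - 1/(1+x) = x/(1+x)$. Dividing gives
\[
a_{ij} = \frac{W_{ij}^E}{W_{ji}^E} = \frac{1/(1+x)}{x/(1+x)} = \frac{1}{x} = 10^{(R_i - R_j)/600}.
\]
Thus $a_{ij}$ is simply $10$ raised to the scaled rating difference; the comparison ratio encodes nothing beyond $R_i - R_j$.

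With this closed form in hand, transitivity is a one-line calculation. I would write
\[
a_{ij}\, a_{jk} = 10^{(R_i - R_j)/600} \cdot 10^{(R_j - R_k)/600} = 10^{\left( R_i - R_j + R_j - R_k \right)/600} = 10^{(R_i - R_k)/600} = a_{ik},
\]
where the middle terms telescope because the intermediate rating $R_j$ cancels. This establishes $a_{ik} = a_{ij} a_{jk}$ for all $i,j,k$, as claimed.

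There is no genuine obstacle here: the entire content lies in recognising that the ratio of win expectancies reduces to a pure power function of the rating gap. Once that simplification is made, transitivity is nothing more than the law of exponents $10^{a} \cdot 10^{b} = 10^{a+b}$ applied to the additively decomposable quantity $(R_i - R_k) = (R_i - R_j) + (R_j - R_k)$. The only point requiring minor care is the algebraic simplification of the ratio, which relies on the cancellation of the common factor $1/(1+x)$; everything else is routine.
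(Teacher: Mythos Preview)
Your proof is correct and follows essentially the same approach as the paper: both arguments hinge on the identity $a_{ij} = 10^{(R_i - R_j)/600}$ and then invoke the additivity of exponents. The only difference is organisational---you simplify each $a_{ij}$ to closed form first and then multiply, whereas the paper multiplies the two unsimplified ratios and simplifies the resulting product; your ordering is slightly cleaner but the mathematical content is identical.
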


\begin{proof}
Due to formula \eqref{eq_win_probability}:
\begin{eqnarray} \label{eq_transitivity}
a_{ij} \times a_{jk} & = & \frac{W_{ij}^E}{W_{ji}^E} \times \frac{W_{jk}^E}{W_{kj}^E} = \frac{1+10^{\left( R_i - R_j \right) / 600}}{1+10^{- \left( R_i - R_j \right) / 600}} \times \frac{1+10^{\left( R_j - R_k \right) / 600}}{1+10^{- \left( R_j - R_k \right) / 600}} = \nonumber \\
& = & \frac{1 + 10^{\left( R_i - R_j \right) / 600} + 10^{\left( R_j - R_k \right) / 600} + 10^{\left( R_i - R_k \right) / 600}}{10^{-\left( R_i - R_k \right) / 600} + 10^{-\left( R_j - R_k \right) / 600} + 10^{-\left( R_i - R_j \right) / 600} + 1} = \nonumber \\
& = & 10^{\left( R_i - R_k \right) / 600} = \frac{1 + 10^{\left( R_i - R_k \right) / 600}}{1 + 10^{-\left( R_i - R_k \right) / 600}} = \frac{W_{ik}^E}{W_{ki}^E} = a_{ik}.
\end{eqnarray}
\end{proof}

Therefore, the total number of available slots ($48-4/3$ since the OFC quota is fixed at $4/3$) can be easily allocated according to the pairwise comparisons $a_{ij}$.
If a continent $k$ is chosen arbitrarily, the number of slots deserved by continent $i$ is computed as
\begin{equation} \label{eq_quota}
q_i = \frac{a_{ik}}{\sum_{j=1}^n a_{jk}} \times \left( 48 - \frac{4}{3} - \lvert S \lvert \right) + \lvert S_i \lvert,
\end{equation}
where $\lvert S \lvert$ is the number of seeded nations and $\lvert S_i \lvert$ is the number of seeded nations from continent $i$.
Transitivity ensures that $q_i$ is independent of $k$.

Let us see an illustration of how the slots for the continents are determined.

\begin{example}
Take all matches played until the 2002 FIFA World Cup, including the championship this particular year, too. Consider Updating between rounds and Seeded set S2. The Elo ratings of the confederations AFC, CAF, CONCACAF, CONMEBOL, and UEFA are 1576.56, 1734.71, 1574.12, 1590.36, and 1806.89, respectively. Let $k$ be the confederation AFC. Thus, the number of slots deserved by the CAF is
\[
\frac{1 + 10^{(1576.56-1734.71)/600}}{1 + 10^{-(1576.56-1734.71)/600}} =  10^{(1734.71-1576.56)/600} \approx 1.83
\]
times the number of slots deserved by the AFC as can be seen from formula \eqref{eq_transitivity}. The corresponding ratios for the CONCACAF, CONMEBOL, and UEFA are 0.99, 1.05, and 2.42, respectively.

The number of slots for the AFC is
\[
\frac{48 - 4/3 - 8}{1 + 1.83 + 0.99 + 1.05 + 2.42} \approx 5.3.
\]
Analogously, the number of slots for the UEFA is
\[
2.42 \times \frac{48 - 4/3 - 8}{1 + 1.83 + 0.99 + 1.05 + 2.42} + 5 \approx 17.82
\]
as there are five seeded UEFA countries according to Section~\ref{Sec44}.
\end{example}

\subsection{The maximal number of slots for a confederation}  \label{Sec46}

The allocation rule in Section~\ref{Sec45} does not contain any restriction on the inequality of the allocation, that is, a particular confederation might get any number of slots between zero and the total available slots. This can be a problem in the case of CONMEBOL (the strongest continent according to Table~\ref{Table3}), which has only 10 members. Consequently, in order to guarantee the competitiveness of the CONMEBOL qualification, we have decided to maximise its quota at 8, which allows direct qualification for the top seven South American countries and qualification for the play-offs for the next two.  


\section{Results and discussion} \label{Sec5}

\begin{table}[t!]
  \centering
  \caption{The results of matches by pairs of confederations}
  \label{Table3}

\begin{subtable}{\textwidth}
  \caption{Seeded set S0}
  \label{Table3a}
    \rowcolors{1}{}{gray!20}
\centerline{
    \begin{tabularx}{1.05\textwidth}{lCCCCcc} \toprule
          & AFC   & CAF   & CONC & CONM & UEFA  & $\sum$ \\ \bottomrule
    AFC   & 0 (0) & 6 (5) & 2 (3) & 3 (4) & 9 (12) & 20 (24) \\
    CAF   & 5 (5) & 0 (0) & 2 (2) & 2 (2) & 16 (19) & 25 (28) \\
    CONCACAF & 6 (3) & 2 (2) & 2 (0) & 4 (4) & 10 (16) & 24 (25) \\
    CONMEBOL & 17 (4) & 15 (2) & 14 (4) & 15 (1) & 77 (31) & 138 (42) \\
    UEFA  & 41 (12) & 41 (19) & 38 (16) & 68 (31) & 173 (30) & 361 (108) \\ \hline
    $\sum$ & 69 (24) & 64 (28) & 58 (25) & 92 (42) & 285 (108) & \textbf{568 (129)} \\ \bottomrule
    \end{tabularx}
}
\end{subtable}

\vspace{0.25cm}
\begin{subtable}{\textwidth}
  \caption{Seeded set S1}
  \label{Table3b}
    \rowcolors{1}{}{gray!20}
\centerline{
    \begin{tabularx}{1.05\textwidth}{lCCCCccc} \toprule
          & AFC   & CAF   & CONC & CONM & UEFA  & Seeded S1 & $\sum$ \\ \bottomrule
    AFC   & 0 (0) & 6 (5) & 2 (3) & 2 (4) & 8 (12) & 2 (0) & 20 (24) \\
    CAF   & 5 (5) & 0 (0) & 2 (2) & 1 (2) & 15 (16) & 2 (3) & 25 (28) \\
    CONCACAF & 6 (3) & 2 (2) & 2 (0) & 4 (3) & 9 (14) & 1 (3) & 24 (25) \\
    CONMEBOL & 9 (4) & 6 (2) & 4 (3) & 2 (0) & 18 (14) & 2 (3) & 41 (26) \\
    UEFA  & 36 (12) & 34 (16) & 29 (14) & 28 (14) & 110 (14) & 42 (26) & 279 (96) \\
    Seeded S1 & 13 (0) & 16 (3) & 19 (3) & 21 (3) & 86 (26) & 24 (5) & 179 (40) \\ \hline
    $\sum$ & 69 (24) & 64 (28) & 58 (25) & 58 (26) & 246 (96) & 73 (40) & \textbf{568 (129)} \\ \toprule
    \end{tabularx}
}
\end{subtable}

\vspace{0.25cm}
\begin{subtable}{\textwidth}
  \caption{Seeded set S2}
  \label{Table3c}
\centerline{
\begin{threeparttable}
    \rowcolors{1}{}{gray!20}
    \begin{tabularx}{1.05\textwidth}{lCCCCccc} \toprule
          & AFC   & CAF   & CONC & CONM & UEFA  & Seeded S2 & $\sum$ \\ \bottomrule
    AFC   & 0 (0) & 6 (5) & 2 (3) & 2 (4) & 6 (10) & 4 (2) & 20 (25) \\
    CAF   & 5 (5) & 0 (0) & 1 (0) & 1 (2) & 12 (16) & 6 (5) & 25 (28) \\
    CONCACAF & 3 (3) & 1 (0) & 0 (0) & 2 (1) & 3 (7) & 2 (4) & 11 (15) \\
    CONMEBOL & 9 (4) & 6 (2) & 4 (1) & 2 (0) & 15 (6) & 5 (13) & 41 (26) \\
    UEFA & 28 (10) & 25 (16) & 16 (7) & 19 (6) & 56 (9) & 49 (27) & 193 (75) \\
    Seeded S2  & 24 (2) & 26 (5) & 16 (4) & 32 (13) & 107 (27) & 73 (15) & 278 (66) \\ \hline
    $\sum$ & 69 (25) & 64 (28) & 39 (15) & 58 (26) & 199 (75) & 139 (66) & \textbf{568 (129)} \\ \toprule
    \end{tabularx}
\begin{tablenotes} \footnotesize
\item
Seeded set S0 means that each national team is assigned to its continent.
\item
Seeded set S1 contains Argentina (CONMEBOL), Brazil (CONMEBOL), England (UEFA), Germany (UEFA).
\item
Seeded set S2 contains Argentina (CONMEBOL), Brazil (CONMEBOL), England (UEFA), France (UEFA), Germany (UEFA), Italy (UEFA), Mexico (CONCACAF), Spain (UEFA).
\item
CONC stands for CONCACAF; CONM stands for CONMEBOL.
\item
The cells show the number of matches won by the confederation in the row against the confederation in the column.
\item
Matches decided in a penalty shootout are considered standard wins and losses.
\item
Two-legged play-offs are counted as two matches, analogous to the FIFA World Ranking.
\item
The number of draws is given in parenthesis.
\item
Matches played in the last round of the (first) group stage are not included.
\end{tablenotes}
\end{threeparttable}
}
\end{subtable}

\end{table}

The outcomes of matches in our database are summarised in Table~\ref{Table3}, separately for the three sets of seeded teams. As expected, two confederations have won the majority of their matches without counting draws, CONMEBOL (60\%) and UEFA (55.88\%) (Table~\ref{Table3a}). However, this does not hold if the best countries are separated from these confederations: the balance of CONMEBOL becomes negative under both Seeded sets S1 and S2 (Table~\ref{Table3b}), and the balance of UEFA becomes negative under Seeded set S2 (Table~\ref{Table3c}). Unsurprisingly, the national teams in both Seeded sets S1 (71.03\%) and S2 (66.67\%) have won most of their matches without counting draws.

\input{Figure_historical_slot_allocation}

Figures~\ref{Fig1} and \ref{Fig2} present how the number of slots would have evolved for the five confederations if the database had been finished after one of the last eight FIFA World Cups. Figure~\ref{Fig1} compares Seeded sets S0 and S1 under the three different update frequencies, while Figure~\ref{Fig2} repeats this analysis for four (S1) and eight (S2) seeded nations. CONMEBOL strongly benefits if the performance of Argentina and Brazil are taken into account as can be seen in Figure~\ref{Fig1}. Nonetheless, South America almost always receives a quota above 10 (the number of its members) if the set of matches is extended at least to the 2010 FIFA World Cup. In addition, the minimum is still above 6.5, which exceeds the number of berths provided by FIFA (see Table~\ref{Table2}).

Contrarily, UEFA usually receives a higher number of slots if some countries are seeded according to Figure~\ref{Fig1}. Figure~\ref{Fig2} uncovers that separating the top five European nations (Seeded set S2) makes the share of UEFA less volatile. Among the other three continents, the performance of African teams seems to be the least stable: the number of CAF slots varies approximately from 8 to 13.5 if the analysis is finished after the 2002 FIFA World Cup, while it remains below 3.5 if the last tournament is the 2018 edition. The results for the AFC (between 2 and 6) and CONCACAF (from 3 to 9.5) are somewhat more robust. Our model implies that CONCACAF should get more slots than AFC in almost every scenario, which is in stark contrast with the official FIFA policy.

However, Figures~\ref{Fig1} and \ref{Fig2} reveal that the length of the sample has a powerful effect on the allocation of slots. Naturally, this is the most visible if the ratings are updated only after each World Cup (the bottom panels) when the teams of a given continent can play 10 or even 20 matches with the same Elo, while the FIFA World Ranking is updated after every single game. A potential remedy can be a lower value for match importance $I$ (see Section~\ref{Sec41}), but we have not wanted to arbitrarily modify the official formula. Another opportunity is to distribute only a fraction of the total slots on the basis of historical performances. For example, one can start from the status quo allocation of 31 slots (one reserved for the host) used between 2006 and 2022 (Table~\ref{Table2}), and allocate the residual 16 slots by our method.

Last but not least, it is interesting to see in Figures~\ref{Fig1} and \ref{Fig2} that the performance of CONMEBOL and UEFA has certainly not declined after 1998. Furthermore, the CONMEBOL quota somewhat increases under Seeded set S2 if the sample is finished later. Hence, the FIFA policy of supporting emerging nations has had questionable results, and the overall quality of the 2026 FIFA World Cup will probably be lower due to the stronger presence of AFC and CAF countries.

\begin{table}[t!]
  \centering
  \caption{Slot allocations in the baseline model (Update frequency: Round; Seeded set S2)}
  \label{Table4}
\centerline{
\begin{threeparttable}
    \rowcolors{1}{gray!20}{}
    \begin{tabularx}{1.08\textwidth}{l CCCCC CCCc} \toprule \hiderowcolors
    \multirow{2}[0]{*}{Confederation} & \multicolumn{8}{c}{Results are taken into account until}      & Official \\
          & 1994  & 1998  & 2002  & 2006  & 2010  & 2014  & 2018  & 2022  & allocation \\ \bottomrule \showrowcolors
    AFC   & 3.69  & \underline{2.81}  & 5.30  & \textbf{5.83}  & 5.19  & 3.88  & 3.48  & 4.48  & 8.33 \\
    CAF   & 6.59  & 6.18  & \textbf{9.72}  & 7.16  & 6.19  & 5.54  & \underline{3.96}  & 7.43  & 9.33 \\
    CONCACAF & 5.38  & \underline{3.97}  & 6.25  & 6.26  & 5.32  & \textbf{7.24}  & 4.81  & 5.33  & 3.67+3 \\
    CONMEBOL & \underline{7.10}  & 7.37  & 7.59  & \textbf{8}     & \textbf{8}     & \textbf{8}     & \textbf{8}     & \textbf{8}     & 6.33 \\
    UEFA  & 23.91 & \textbf{26.34} & \underline{17.82} & 19.42 & 21.98 & 22.01 & 26.43 & 21.43 & 16 \\ \toprule
    \end{tabularx}
\begin{tablenotes} \footnotesize
\item
Seeded set S2 contains Argentina (CONMEBOL), Brazil (CONMEBOL), England (UEFA), France (UEFA), Germany (UEFA), Italy (UEFA), Mexico (CONCACAF), Spain (UEFA).
\item
The CONMEBOL quota is maximised at 8 since there are 10 CONMEBOL members.
\item
The last round of group matches is disregarded.
\item
For each continental zone, the maximal number of slots is written in \textbf{bold}, while the minimal number of slots is \underline{underlined} if the end of the sample is between 1994 and 2022.
\end{tablenotes}
\end{threeparttable}
}
\end{table}

According to Figures~\ref{Fig1} and \ref{Fig2}, as well as theoretical considerations, the results under Seeded set S2 should be the baseline. Furthermore, in order to minimise the number of matches played without rating changes, the update frequency Round should be chosen. The findings from this baseline model are summarised in Table~\ref{Table4}, taking into account that South America cannot get more than eight slots (these extra slots have not been reallocated in the top right panel of Figure~\ref{Fig2}).

Hence, CONMEBOL deserves at least 7 and UEFA deserves at least 18 slots, both numbers being higher than the berths provided for them in the 2026 FIFA World Cup. Again, the CONCACAF zone seems to be more competitive than the AFC zone in historical FIFA World Cups, which is not reflected in the official allocation. This is a remarkable fact since the separation of Mexico in the Seeded set S2 almost always decreases the CONCACAF quota as can be seen in Figure~\ref{Fig2}. The performance of the CAF teams is the most volatile but the CAF is robustly entitled to a higher share than the AFC in line with the FIFA policy.

\begin{table}[t!]
  \centering
  \caption{Slot allocations for the 2026 FIFA World Cup based on different methods}
  \label{Table5}
\centerline{
\begin{threeparttable}
    \rowcolors{1}{gray!20}{}
    \begin{tabularx}{1.12\textwidth}{l CCC CCC CCC} \toprule \hiderowcolors
    Seeding & \multicolumn{3}{c}{No seeded teams} & \multicolumn{3}{c}{Seeded set S1} & \multicolumn{3}{c}{Seeded set S2} \\ 
    Update frequency & Round & Stage & 4-year & Round & Stage & 4-year & Round & Stage & 4-year \\ \bottomrule \showrowcolors
    AFC   & 4.77  & 5.30  & 5.89  & 3.82  & 4.10  & 4.54  & 4.48  & 4.89  & 5.34 \\
    CAF   & 7.60  & 7.52  & 8.39  & 6.40  & 6.16  & 6.65  & 7.43  & 7.13  & 7.60 \\
    CONCACAF & 6.21  & 6.65  & 8.07  & 5.26  & 5.65  & 6.68  & 5.33  & 5.36  & 5.50 \\
    CONMEBOL & 8     & 8     & 8     & 8     & 8     & 8     & 8     & 8     & 8 \\
    OFC   & 1.33  & 1.33  & 1.33  & 1.33  & 1.33  & 1.33  & 1.33  & 1.33  & 1.33 \\
    UEFA  & 20.09 & 19.19 & 16.31 & 23.19 & 22.75 & 20.80 & 21.43 & 21.29 & 20.23 \\ \bottomrule
    \end{tabularx}
\begin{tablenotes} \footnotesize
\item
Seeded set S1 contains Argentina (CONMEBOL), Brazil (CONMEBOL), England (UEFA), Germany (UEFA).
\item
Seeded set S2 contains Argentina (CONMEBOL), Brazil (CONMEBOL), England (UEFA), France (UEFA), Germany (UEFA), Italy (UEFA), Mexico (CONCACAF), Spain (UEFA).
\item
The CONMEBOL quota is maximised at 8 since there are 10 CONMEBOL members.
\item
The OFC quota is fixed at 1.33 and does not come from our method.
\item
The last round of group matches is disregarded.
\end{tablenotes}
\end{threeparttable}
}
\end{table}

Table~\ref{Table5} focuses on the slot allocation for the 2026 FIFA World Cup based on all available information. The rule of the maximal number of slots (Section~\ref{Sec46}) applies to the CONMEBOL, resulting in a quota of  eight. Both separating the nations with the highest number of appearances and choosing the frequency of Elo updates have non-negligible effects, as has already been uncovered in Figures~\ref{Fig1} and \ref{Fig2}. Rarer updates are associated with fewer slots for UEFA and more for all other continents (except for OFC and CONMEBOL).

On the other hand, the results are not that straightforward if the number of seeded nations increases. First (changing from Seeded set S0 to S1), the slots of UEFA grow substantially. This is probably due to the high number of competitive nations in Europe even without England and Germany, which is amplified by the seeding of Argentina and Brazil, the best South American teams. Then---between Seeded sets S1 and S2---the share of UEFA slightly decreases as even more successful nations are placed in the seeded set.

\begin{table}[t!]
  \centering
  \caption{The effect of including the last round of group matches on the allocation of slots}
  \label{Table6}
\centerline{
\begin{threeparttable}
    \rowcolors{1}{gray!20}{}
    \begin{tabularx}{1.12\textwidth}{l RRR RRR RRR} \toprule \hiderowcolors
    Seeding & \multicolumn{3}{c}{No seeded teams} & \multicolumn{3}{c}{Seeded set S1} & \multicolumn{3}{c}{Seeded set S2} \\ 
    Update freq.\ & Round & Stage & 4-year & Round & Stage & 4-year & Round & Stage & 4-year \\ \bottomrule \showrowcolors
    AFC   & 1.41  & 1.98  & 2.78  & 1.76  & 2.36  & 3.72  & 1.32  & 1.74  & 2.46 \\
    CAF   & 3.39  & 3.96  & 6.21  & 3.55  & 4.29  & 7.05  & 2.84  & 3.50  & 4.70 \\
    CONCACAF & $-$0.27 & $-$0.36 & $-$1.20 & 0.17  & 0.01  & $-$0.05 & $-$0.32 & $-$0.31 & $-$0.48 \\
    UEFA  & $-$4.52 & $-$5.58 & $-$7.80 & $-$5.48 & $-$6.66 & $-$10.71 & $-$3.85 & $-$4.94 & $-$6.69 \\ \toprule
    \end{tabularx}
\begin{tablenotes} \footnotesize
\item
Seeded set S1 contains Argentina (CONMEBOL), Brazil (CONMEBOL), England (UEFA), Germany (UEFA).
\item
Seeded set S2 contains Argentina (CONMEBOL), Brazil (CONMEBOL), England (UEFA), France (UEFA), Germany (UEFA), Italy (UEFA), Mexico (CONCACAF), Spain (UEFA).
\end{tablenotes}
\end{threeparttable}
}
\end{table}

Finally, Table~\ref{Table6} provides a kind of sensitivity analysis by including the last round of group matches, which has notable effects on the results. The CONMEBOL and OFC are not investigated as the former confederation always receives the maximal number of slots allowed and the quota of the latter is fixed. However, depending on the model variant considered, UEFA would lose at least 3.8 and up to 10.7 slots due to considering these matches. At the same time, AFC would gain between 1.3 and 3.7 slots, while CAF between 2.8 and 7.
Clearly, the AFC and CAF teams won several matches in the last round of the group stage against European countries that have probably already qualified for the knockout stage. No similar trend can be observed for South American teams, which are perhaps less prone to strategic behaviour. Table~\ref{Table6} suggests that the problem of incentives cannot be neglected in the analysis of FIFA World Cup group matches.

The results above are worth comparing with the findings of \citet{KrumerMoreno-Ternero2023}, although this is not so straightforward since they also investigate five methods. Furthermore, an important novelty of the current study is our ``look into the past'' in the sense that slot allocations are investigated if the expansion would have been decided before.
We see two main differences:
\begin{itemize}
\item 
\citet{KrumerMoreno-Ternero2023} do not imply such a high quota for South America;
\item
AFC is entitled to more slots than CONCACAF according to \citet{KrumerMoreno-Ternero2023}, in line with the policy of FIFA, while a reversed order emerges from our methods.
\end{itemize}
However, our proposal for a higher number of UEFA slots is fully supported by \citet{KrumerMoreno-Ternero2023}.

The discrepancies are obviously caused by the different approaches of the two studies. \citet{KrumerMoreno-Ternero2023} mainly start from the FIFA World Ranking and the World Football Elo Ratings: they count the number of countries among the top 48, and the average annual number of teams ranked 1--31 and 32--48. This approach implicitly assumes that both measures provide a reliable ranking of at least the top 48 teams, which is not necessarily the case.

In particular, the ratings of South American teams are likely too low. The CONMEBOL region is different from the other ones because it has the lowest number of teams (only ten) and its qualification tournament is probably the least predictable. The South American qualifiers for the FIFA World Cup are organised as a double round-robin tournament, where each team plays 18 matches \citep{DuranGuajardoSaure2017}, Hence, the majority of games played by CONMEBOL teams are against teams from the same continent \citep{Price2021}. The results of these matches do not affect the average Elo of the confederation. In addition, CONMEBOL contains only ten nations and even the weakest perform relatively well: the last team scored 12 (10) points out of 54 in the qualification for the 2018 (2022) FIFA World Cup. Therefore, in contrast to the other confederations, the best countries in the CONMEBOL zone are not able to collect so many points against bottom teams having no reasonable chance to qualify.

Naturally, our allocation rules should not necessarily be preferred to the ones considered by \citet{KrumerMoreno-Ternero2023}. The methodology proposed here is not able to take the expected performance of teams that have never (or rarely) qualified for the FIFA World Cup into account. If a confederation is dominated by some teams performing well in previous inter-continental matches, then its quota can be relatively high even if the next nation is potentially too weak for the FIFA World Cup.

However, if the somewhat orthogonal approaches of us and \citet{KrumerMoreno-Ternero2023} point in the same direction (UEFA should receive additional slots), the decision-makers could have few arguments against accepting the proposal.

\section{Concluding remarks} \label{Sec6}

Inspired by the recent expansion of the FIFA World Cup to 48 teams, the current work has proposed allocation rules to distribute tournament slots among different sets of teams. Our approach adapts the methodology of the official FIFA World Ranking, a rating of national teams, to compare the performance of continents in the FIFA World Cups and their inter-continental play-offs since 1954. Various alternatives exist based on the frequency of updating, as well as the length of the sample, and the set of teams seeded. 
The results imply that more European and South American teams should play in future FIFA World Cups. Furthermore, CONCACAF deserves more slots than AFC.

Unfortunately, the current qualification system of the FIFA World Cup is based neither on ensuring the participation of the best 32 teams in the world, nor does it fairly allocate slots according to a reasonable metric \citep{StoneRod2016}. Some national teams can gain from moving across confederations \citep{Csato2023c}, or even from deliberately losing in their own tournament \citep{Csato2022a}.
The proposed ranking of FIFA confederations is able to mitigate these problems to some extent. Furthermore, such an allocation of continental slots can create stronger incentives to perform: for example, taking the games played in the last round of the group stage into account would mean additional slots for the continental zone in the case of winning.

To summarise, choosing a fairer, more transparent slot allocation method would be a socially responsible obligation of FIFA. Hopefully, at least some berths will be determined by well-defined rules in the future. The recent improvement of the FIFA World Ranking \citep{FIFA2018c}, which has eliminated the main weaknesses of the previous formula \citep{CeaDuranGuajardoSureSiebertZamorano2020, Csato2021a, Kaminski2022, LasekSzlavikGagolewskiBhulai2016}, shows that FIFA is open to suggestions from the academic community.

The proposed approach of rating sets of teams based on historical matches between them can be used in other contexts than the FIFA World Cup. As mentioned by \citet{KrumerMoreno-Ternero2023}, the most straightforward case seems to be the European club competitions organised by the Union of European Football Associations (UEFA). Here, the number of slots provided for each association depends on the UEFA country ranking, the average number of points collected by the participating clubs over the last five seasons \citep{Csato2022b}. However, the interaction of these international contests and the national leagues \citep{GunerHamidiSahneh2023, RappaiFuresz2024} can justify different allocation rules. For instance, there is robust evidence that the substantial income from the UEFA club competitions reduces the intensity of competition in domestic leagues and leads to the dominance of some teams over the long term \citep{PawlowskiBreuerHovemann2010, Peeters2011}. Since improving competitive balance is an important aim of the UEFA \citep{Gyimesi2024}, it would be unfavourable to apply a strictly performance-based rule without any compensation mechanism. Note that this consideration is much less relevant for national teams as they cannot increase their strength by buying high-ability players on the market.

\section*{Acknowledgements}

The paper develops a study of \emph{L\'aszl\'o Marcell Kiss} that has been presented at the Scientific Students' Associations of Corvinus University of Budapest in 2023. \\
\emph{Andr\'as Gyimesi} and four anonymous reviewers have given useful remarks. \\
The research was supported by the National Research, Development and Innovation Office under Grant FK 145838.

\bibliographystyle{apalike} 
\bibliography{All_references}

\end{document}